\definecolor{mreg}{RGB}{25,23,124}
\newcommand{\regcolor}[1]{\textcolor{mreg}{#1}}
\definecolor{mop}{RGB}{7, 0, 255}
\newcommand{\opcolor}[1]{\textcolor{mop}{#1}}
\definecolor{mconst}{RGB}{113,113,113}
\newenvironment{code}{\captionsetup{type=listin}}{}
\declaretheorem[name=Theorem]{theorem}
\begin{document}
\twocolumn[
\mlsystitle{COMET: Neural Cost Model Explanation Framework}



\mlsyssetsymbol{equal}{*}

\begin{mlsysauthorlist}
\mlsysauthor{Isha Chaudhary}{UIUC}
\mlsysauthor{Alex Renda}{MIT}
\mlsysauthor{Charith Mendis}{UIUC}
\mlsysauthor{Gagandeep Singh}{UIUC,vmware}
\end{mlsysauthorlist}

\mlsysaffiliation{UIUC}{Department of Computer Science, University of Illinois Urbana-Champaign, Illinois, USA}
\mlsysaffiliation{MIT}{MIT CSAIL, Massachusetts, USA}
\mlsysaffiliation{vmware}{VMWare Research, USA}

\mlsyscorrespondingauthor{Isha Chaudhary}{isha4@illinois.edu}

\mlsyskeywords{Machine Learning, MLSys}

\vskip 0.3in
]

\begin{abstract}
Cost models predict the cost of executing given assembly code basic blocks on a specific microarchitecture. Recently, neural cost models have been shown to be fairly accurate and easy to construct. They can replace heavily engineered analytical cost models used in mainstream compiler workflows. However, their black-box nature discourages their adoption. In this work, we develop the first framework, \tool{}, for generating faithful, generalizable, and intuitive explanations for neural cost models. We generate and compare \tool{}'s explanations for the popular neural cost model, Ithemal against those for an accurate CPU simulation-based cost model, uiCA. Our empirical findings show an inverse correlation between the prediction errors of Ithemal and uiCA and the granularity of basic block features in \tool{}'s explanations for them, thus indicating potential reasons for the higher error of Ithemal with respect to uiCA. 
\end{abstract}
\printAffiliationsAndNotice{}  
\section{Introduction}
\emph{Cost models} predict the cost (memory, time, energy, etc) that an assembly code basic block, a sequence of assembly instructions with no jumps or loops, takes while executing on a specific microarchitecture. They are used to guide compiler optimization~\citep{vectorization, opt2} and superoptimization~\citep{stoke}. They can be traditional or learned models. 

Traditional cost models are typically simulation and static-analysis based models. Simulation-based cost models generate their predictions by simulating program execution for a given CPU. 
They are hand-engineered using released documentation and micro-benchmarking the CPU under study.
Examples of these cost models are uiCA~\citep{uica} and LLVM-MCA~\citep{llvm-mca}. 
As these models are traditional programs, domain experts can intuitively understand and debug them, and hence they are commonly deployed in practical settings. However, they require significant engineering effort to construct and must be manually re-engineered to reflect changes across CPU microarchitectures.
Static-analysis based cost models, such as IACA~\citep{iaca} and OSACA~\cite{osaca} predict the cost of executing a program on a given microarchitecture by developing a model of the target CPU and deploying static analysis methods to predict the cost of a given basic block with the model. These cost models often have a high error in their predictions~\cite{uica, bhive}. 
%

Alternatively, machine learning techniques can be used to learn a cost model~\citep{ithemal, tpu-cost-model, dlccostmodel,granite}.
Development of ML-based cost models requires the one-time effort of collecting a dataset of representative programs, collecting the end-to-end cost for the execution of those programs on the CPU under study, and training a selected type of ML model. While simple and interpretable ML models could be used for constructing cost models, prior work~\cite{ithemal, granite, tpu-cost-model, dlccostmodel} has used neural networks as cost predictors to precisely approximate the complex function mapping basic blocks to their costs. 
An instance of such neural cost models is Ithemal~\citep{ithemal}, which is an LSTM model trained on the BHive~\citep{bhive} dataset of x86 basic blocks to predict basic block throughput (average number of CPU clock cycles to execute the block when looped in steady state).
Ithemal is more accurate on the BHive dataset than most throughput models~\citep{bhive}. Ithemal needs less manual effort to construct than any simulation-based or static-analysis-based cost model. 
However neural models generally have the downside that they are uninterpretable~\cite{interpretmlbook}. 

\textbf{This work.}
Our goal is to bring interpretability to inherently black-box but accurate neural cost models, by developing a general framework that can generate trustworthy and intuitive explanations of their predictions. These neural cost models could have arbitrary architectures~\cite{ithemal,granite}, requiring custom explanation methods, and could also be proprietary. To avoid engineering custom explanation methods for each model, we develop a common explanation framework that is agnostic to the type or structure of the model. Apart from saving manual engineering effort, a common framework would facilitate a comparison between neural and other types of cost models with respect to the explanations of their predictions. To achieve our goals, we develop our explanation framework to generate explanations that (i) assume just query-access to the cost model, which may be available for some proprietary cost models too, (ii) faithfully  reflect the cost model's behavior, 
(iii) generalize across multiple basic blocks, and 
(iv) are simple and interpretable for domain experts. 

\textbf{Key challenges.} 
%
For building trustworthy explanations, we need to formalize
the desirable properties of faithfulness, generalizability, and simplicity~\citep{explanationprops}.  
%
There is a tradeoff between the degree to which a given explanation satisfies the above desirable properties and its computational cost. Therefore, we need to design efficient algorithms that can balance this tradeoff.
Prior works~\citep{lime,anchors} in domains such as Vision or NLP have used perturbed inputs to efficiently generate explanations with only query-access to the model. However, their perturbation algorithms heavily utilize local neighborhoods of their inputs while creating their explanations. In the discrete domain of basic blocks, there is no well-defined concept of locality. Hence, we need specialized perturbation algorithms to handle this domain-specific challenge and derive close approximations to the complex behavior of a given cost model in a reasonable number of queries. 

\textbf{Our approach.}
We identify that global explanations with desirable properties may be computationally intractable or even may not exist for complex cost models. Hence, we focus on explaining a given model's prediction for a target basic block. We first formalize the ideal, query-based, block-specific explanations with desirable properties as an optimization problem. We observe that generating such ideal explanations is intractable. To practically generate explanations, we relax our requirements and develop \tool{}, a perturbation-based explanation framework based on the design of (i) novel primitives for explanations that capture both coarse-grained (e.g. number of instructions) and fine-grained (e.g., instructions and data dependencies) features of the basic block, and (ii) new custom perturbation algorithms for generating a diverse set of basic blocks that help gauge the complex behaviors of cost models.

\textbf{Contributions.}
We make the following contributions:
\begin{enumerate}[leftmargin=*]
    \item We formalize the ideal query-based explanations having desirable properties for cost models for any target basic block as an 
    optimization problem that is agnostic to any particular Instruction Set Architecture (ISA). 
    \item  We relax the problem to make it practically solvable. Building on our relaxation, we present \tool{} (COst Model ExplanaTion framework), a novel and efficient explanation framework for neural cost models. As \tool{} depends on the ISA, we have implemented it for the popular x86 ISA, and it can be extended to other ISAs with non-trivial engineering effort. We open-source our implementation at \url{https://github.com/uiuc-focal-lab/COMET}. \tool{}'s explanations identify the features of a target basic block that are important for a given cost model's prediction. 
    \item We systematically analyze \tool{}'s accuracy and use it to gain insights into the working of common cost models. We explain basic blocks in the popular BHive dataset~\citep{bhive}. We empirically observe that \tool{}'s explanations for the neural cost model Ithemal more often consist of coarser-grained features of the basic block, such as the block's number of instructions, as compared to the explanations for the lowest error simulation-based cost model uiCA, indicating potential sources of the relatively higher error in Ithemal's predictions with respect to uiCA. 
\end{enumerate}
%
\tool{} aims to help our stakeholders, i.e. compiler and performance engineers, develop an intuition about and debug neural cost models in a simple yet precise way. 
We anticipate this work to go a long way in developing better neural cost models and making them trustworthy.


\section{Related work}


\textbf{Explanation techniques}.
Explanations for ML models consist of either building inherently interpretable ML models~\citep{inherentinterpret1} or creating post-hoc explanations for the models~\citep{lime, anchors, posthoc, expalgo}. Post-hoc explanations are preferred as accurate cost modeling for the CPU's pipelined architecture makes complex models more suitable. These can either describe a model globally~\citep{shap} or for specific inputs~\citep{lime, anchors}. Explanation techniques can also be broadly classified as black-box~\citep{lime, anchors, shap} and white-box techniques~\citep{saliency1, saliency2}. Further classifications of explanation techniques can be as perturbation/example-based ~\citep{pertxai3,pertxai2, pertxai1} and symbolic explanation techniques~\citep{symb1,symb2,symb3,symb4}. While symbolic methods give formal guarantees on the explanations, they do not scale to complex models yet. 
\citep{anica} is a differential-testing tool to analyze the inconsistencies between multiple cost models. This tool, unlike \tool{}, is not meant to explain a particular prediction of a cost model to enable case analysis.

\textbf{Input perturbation algorithms}.
For domains wherein the input is a sequence of discrete entities such as NLP and code, the prior perturbation-based explanation algorithm by \citet{anchors} has used generative models \citep{bert, codebert} to obtain input perturbations. These perturbations might not be syntactically correct and can result in erroneous explanations~\citep{explanationcode}. Hence, we have not used such unconstrained perturbation techniques in our explanation framework. Moreover, as mentioned above, there is no well-defined concept of locality in this domain. Thus, we can not use the perturbation algorithms from prior work in other domains which generally perturb the input in some local regions.
Stoke \citep{stoke} is a stochastic superoptimizer that perturbs input x86 assembly programs to optimize them. While Stoke does not operate on embedding spaces, it can generate syntactically incorrect perturbations. 

\section{Overview and motivating example}
\begin{figure*}[htb]
    \centering
    \includegraphics[width=\textwidth]{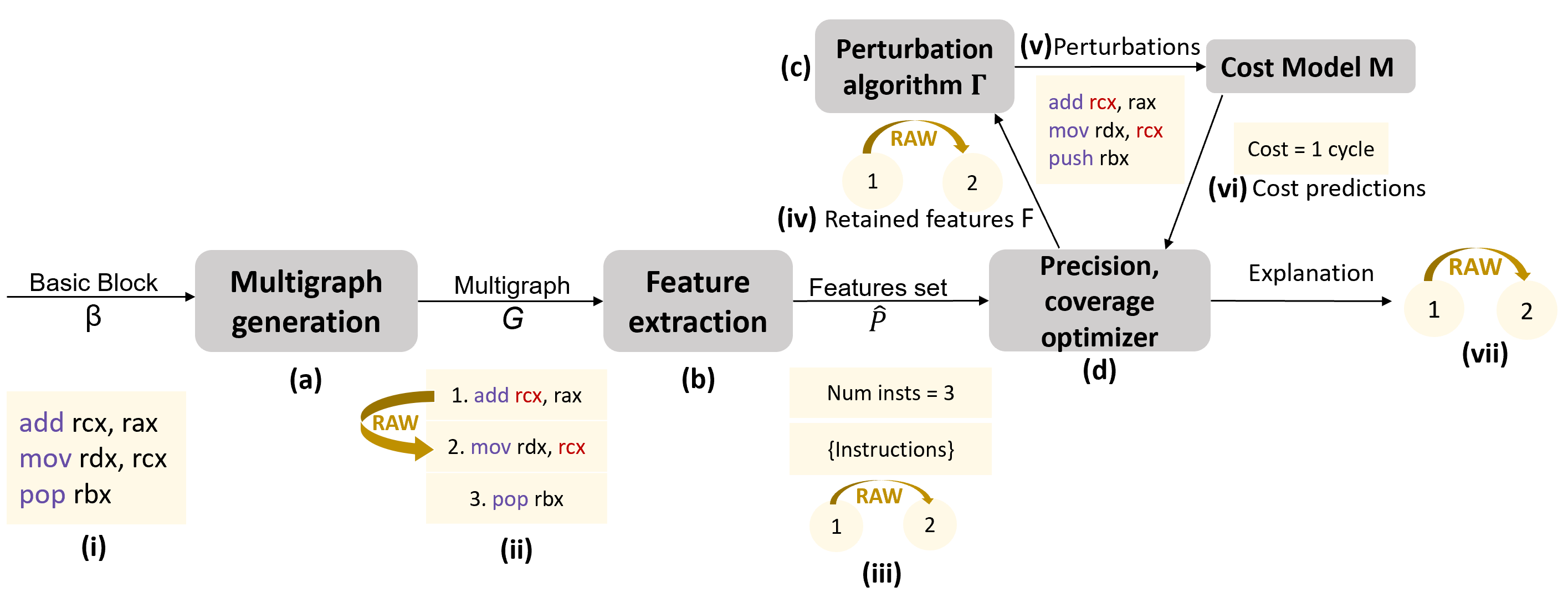}
    \caption{ \tool{} is given a cost model $\mathcal{M}$ and a basic block $\basicblock$ as input. \tool{} identifies the features of $\basicblock$ that explain the prediction $\costmodel(\basicblock)$. \tool{} first converts $\basicblock$ to a multigraph $\mathcal{G}$ in (a). $\mathcal{G}$ has the instructions and data dependencies of $\basicblock$ as its vertices and edges respectively. The features $\Hat{\mathcal{P}}$ of $\basicblock$,  are then extracted from $\mathcal{G}$ in (b). Sets of these features are fed into \tool{}'s perturbation algorithm $\Gamma$ (c) that generates several perturbations that preserve the features in the corresponding retained input feature set $\mathcal{F}$. \tool{} obtains the predictions of cost model $\costmodel$ for each perturbed basic block, which are then used for the estimation of the precision and coverage of a feature set $\mathcal{F}$. $\mathcal{F}$ having precision higher than $(1-\delta)$ with maximum coverage is identified by the precision and coverage optimizer in (d) and is output as \tool{}'s explanation for $\costmodel(\basicblock)$.}
    \label{fig:overview}
\end{figure*}

Consider the x86 basic block in Listing~\ref{code:running_eg}(a), represented in the Intel assembly syntax~\citep{intelsyntax} (default syntax throughout the paper). The throughput cost of this basic block predicted by the neural cost model Ithemal for the Haswell microarchitecture is $1.3$ cycles. We use \tool{} to explain the prediction made by Ithemal. 
An \emph{explanation} is a small set of features of the input basic block whose presence is \emph{sufficient} to get a prediction from the cost model close to its original prediction. 
\tool{} identifies the feature corresponding to Read-after-Write (RAW) data dependency between instructions 1 and 2 as an explanation for Ithemal's throughput prediction. Please refer to Appendix~\ref{app:deps} for information on the types of common data dependencies in assembly basic blocks. 

\begin{code}
\begin{center}
\begin{minipage}{0.2\textwidth}
\centering
\begin{lstlisting}[language={[x86masm]Assembler}]
add rcx, rax
mov rdx, rcx
pop rbx
\end{lstlisting}
\subcaption{Input basic block}
\end{minipage}
\begin{minipage}{0.2\textwidth}
\centering
\begin{lstlisting}[language={[x86masm]Assembler}]
add rcx, rax
mov rdx, rcx
|\textbf{\color{red}push}| rbx
\end{lstlisting}
\subcaption{Perturbed basic block}
\end{minipage}
\captionof{listin}{Motivating Example}
\label{code:running_eg}
\end{center}
\end{code}

Figure~\ref{fig:overview} illustrates \tool{}'s working on the above example. 
\tool{} first extracts the candidate features, i.e., individual instructions, their count, and data dependencies between instruction pairs, as shown in Figure~\ref{fig:overview}(iii). 
It searches over all possible combination sets of the candidate features for an explanation that is simple, faithful to the cost model's behavior, and extends to other basic blocks as well. 
\tool{} evaluates any feature set through perturbations of the input basic block that retain the features in the set [Figure~\ref{fig:overview}(c)]. 
For example, for the singleton set consisting of just the RAW dependency in the block, a possible perturbation is shown in Listing~\ref{code:running_eg}(b), which preserves the dependency. 
\tool{} obtains the predictions of the given cost model for each perturbed basic block and uses them for the estimation of the faithfulness (precision) and generalizability (coverage) of a set of features. The set with precision higher than a threshold and maximum coverage is output as \tool{}'s explanation for the given cost model and basic block [Figure~\ref{fig:overview}(d)]. 

As per~\citet{uops.info}, for Haswell, the canonical forms of the individual instructions in the block have similar throughputs. The RAW data dependency will disable the independent, parallel execution of instructions 1 and 2, and hence intuitively be a bottleneck for the block's execution. 
As \tool{}'s explanation matches our intuition, we can be confident that Ithemal predicts throughput focusing on the correct features for this and similar blocks. 

\section{Formalizing cost model explanations}   
In this section, we formalize our notion of explanations for cost models and discuss the desirable properties of the explanations. 
Our objective is to develop an explanation framework that only queries a given cost model (query-access only) to explain its behavior. Table~\ref{tab:notation} presents the important notation used when discussing the following formalism. 

\begin{table}
    \centering
    \caption{Notation}
    \begin{tabular}{@{}lr@{}}
        \toprule
          Notation & Meaning\\
         \midrule
         $\costmodel$ & Cost model\\
         $\basicblock$ & Basic block\\
         $\mathcal{P}$ & Set of all block features\\
         $\hat{\mathcal{P}}$ & Block features used to form explanations\\
         $\mathcal{F}$ & Given set of features\\
         $Prec(\mathcal{F})$ & Precision of $\mathcal{F}$\\
         1-$\delta$ & Precision threshold\\
         $Cov(\mathcal{F})$ & Coverage of $\mathcal{F}$\\
         $\pertmodel(\mathcal{F})$ & All perturbations of $\basicblock$ retaining $\mathcal{F}$\\
         $\mathcal{D}_\mathcal{F}$ & Distribution over blocks in $\pertmodel(\mathcal{F})$\\
         $\graphbb$ & Multigraph of $\basicblock$\\
         $\Gamma$ & \tool{}'s perturbation algorithm\\
         \bottomrule
    \end{tabular}
    
    \label{tab:notation}
    
\end{table} 



We first formalize the cost model as a function $\costmodel$ that maps valid basic blocks in a given Instruction Set Architecture (ISA) to real-valued costs.  
%
Let $\mathcal{T}$ be a set of $\costmodel$'s predictions that we want to explain over the set of valid basic blocks (global explanation). 

An explanation for the behavior of $\costmodel$ over $\mathcal{T}$ is the common features of basic blocks having cost prediction in $\mathcal{T}$, that are not present in other basic blocks.

For example, consider a hypothetical, crude throughput-predicting cost model $\costmodel_1$ that assigns a throughput of $2$ cycles if and only if a basic block has $8$ instructions. If $\mathcal{T} = \{2\}$, then all the blocks having prediction in $\mathcal{T}$ will have $8$ instructions while the rest will not. Hence the distinguishing factor and therefore a correct global explanation of the cost model's behavior in $\mathcal{T}$ will be the number of instructions equal to $8$. 

While it might be possible for simple cost models such as $\costmodel_1$ to have global concepts for their predictions that can explain them in certain $\mathcal{T}$, generally cost models are complex and specialized to exploit features of the input basic block. Hence, as it may be infeasible to generate global explanations with desirable properties, we focus on generating block-specific explanations for cost model $\costmodel$. 
To explain the cost prediction for the basic block $\basicblock$, we specialize $\mathcal{T}$ to be an $\epsilon-$ball around $\costmodel(\basicblock)$,
where $\epsilon> 0$ is a small constant. 
Our generated explanations are sets of features of the block whose presence is sufficient for the cost model to make its prediction.


The smallest meaningful units (basic features) of an assembly basic block are its tokens (opcodes and operands). 
Let $\mathcal{P}_\basicblock$ be the set of all basic features and all functions of basic features, which we cumulatively call features, of the basic block $\basicblock$. Some elements of $\mathcal{P}_\basicblock$ for the input basic block in Figure~\ref{fig:overview} are shown in Figure~\ref{fig:overview}(iii).
Note that, as $\mathcal{P}_\basicblock$ captures all the features of the basic block, there is a one-to-one correspondence between $\mathcal{P}_\basicblock$ and the block. 
The remaining discussion in this paper will describe our approach for generating explanations for the cost model $\costmodel$ when predicting the cost for a basic block $\basicblock$. 
To simplify notation, unless mentioned otherwise, we will omit $\costmodel$ and $\basicblock$ from the subscripts of symbols, e.g., $\mathcal{P}_\basicblock$ will be written as $\mathcal{P}$. 



\subsection{Ideal query-only explanation}\label{sec:idealexp} 

Let the set of features $\mathcal{F}^*\subseteq \mathcal{P}$ be the ideal explanation for $\costmodel(\basicblock)$ based on queries to $\costmodel$.
Desirable properties of $\mathcal{F}^*$ are that it should be \emph{faithful} to cost model's behavior, \emph{generalizable} to multiple basic blocks, and \emph{simple}~\citep{explanationprops}, which we formalize next. 


Let $\pertmodel$ be a perturbation function that is given a set of features $\mathcal{F}$ of basic block $\basicblock$ as input, and returns a set of valid assembly basic blocks $B_{\mathcal{F}}$, where each basic block $\basicblock'\in B_{\mathcal{F}}$ differs from $\basicblock$ only by some perturbations to the features in $\mathcal{P}\setminus\mathcal{F}$ in $\basicblock$. 
Output of $\pertmodel(\mathcal{F})$ includes $\basicblock$ also.
Consider the block in Figure~\ref{fig:overview}(i). If the set \{instruction 1: \opcolor{add} \regcolor{rcx} \regcolor{rax}\} is input into $\pertmodel$, then the basic block shown in Figure~\ref{fig:overview}(v) is an element in the output set of basic blocks, as it perturbs some features not in the input set of features. 


\textbf{Faithfulness}.
A set of features $\mathcal{F}\subseteq \mathcal{P}$ will be a faithful explanation for the prediction of $\costmodel(\basicblock)$ if perturbations of features of $\basicblock$ that are not in $\mathcal{F}$ cannot change the cost prediction of $\costmodel$ significantly, i.e., by more than $\epsilon$. Otherwise, $\mathcal{F}$ does not completely capture the features used by $\costmodel$ for its prediction for $\basicblock$, and hence is not faithful to $\costmodel$'s behavior. \eqref{eq:faithful} presents the above notion as a logical statement $\varphi(\mathcal{F})$ that must be satisfied by the ideal, faithful explanation $\mathcal{F}^*$. 
\begin{equation}
    \label{eq:faithful}
    \varphi(\mathcal{F}) \triangleq 
    (\mathcal{F}\subseteq\mathcal{P}) \text{ and }
    (\forall\alpha\in \pertmodel(\mathcal{F})\ldotp \costmodel(\alpha) \in\mathcal{T})
\end{equation}


A trivially faithful set of features is $\mathcal{P}$, as it would contain all the basic block features that are important for cost prediction. 
But this explanation is not useful, as $P$ can faithfully explain $\basicblock$ for any cost model but it does not precisely distinguish features according to the behavior of a target cost model.   


\textbf{Generalizability and simplicity}.
To overcome the above issue, we require that faithful explanations of basic block $\basicblock$ should also explain other blocks that contain the features in the explanation and where the cost model $\costmodel$ makes predictions close to $\costmodel(\basicblock)$ (generalizable).
Every set $\mathcal{F}\subseteq\mathcal{P}$ will have a corresponding set of basic blocks (potentially empty), $\Omega_{\mathcal{F}}$~\eqref{eq:maxgenset} containing basic blocks with similar predictions as $\basicblock$ and having $\mathcal{F}$ as faithful explanations. 
For faithful explanations with maximum generalizability, we need to maximize the cardinality of $\Omega_{\mathcal{F}}$ over the set of faithful explanations. 
\begin{equation}
    \label{eq:maxgenset}
    \Omega_{\mathcal{F}} \triangleq \{\alpha\in\pertmodel(\mathcal{F})\text{ and }\costmodel(\alpha)\in\mathcal{T} \text{ and }\varphi_\alpha(\mathcal{F})\}
\end{equation}

For higher interpretability, ideal explanation $\mathcal{F}^*$ should be simple. While there are many metrics for simplicity of explanations, a common metric for sets of features used as explanations is their cardinality~\citep{explanationsize,interpretmlbook}. 
Hence, for simple, faithful, and generalizable explanations $\mathcal{F}^{*}$, we solve the optimization problem~\eqref{eq:overallideal}, where $\lambda > 0$ is a regularization parameter. 

\begin{equation}
    \label{eq:overallideal}
    \mathcal{F}^* \triangleq \mathop{argmax}_{\mathcal{F} \text{ s.t. } \varphi(\mathcal{F})} (\lvert \Omega_{\mathcal{F}}\rvert - \lambda.\lvert \mathcal{F}\rvert)
\end{equation}

\subsection{Practical query-only explanations}\label{sec:practicalexp} 


There are two levels of intractability in the above formulation of ideal explanations~\eqref{eq:overallideal}. First, the evaluation of the faithfulness condition~\eqref{eq:faithful} for a given set of features $\mathcal{F}$ requires querying $\costmodel$ for the cost prediction of all the basic blocks in the large set, $\pertmodel(\mathcal{F})$. We refer the reader to Appendix~\ref{app:samplespacesize} for examples of some estimates of the cardinality of $\pertmodel(\mathcal{F})$. Second, the computation in~\eqref{eq:maxgenset} requires predicting the cost and computing faithful explanations for all basic blocks in $\pertmodel(\mathcal{F})$. Hence, to practically solve the explanation problem, we relax it as described next. 

\textbf{Probabilistic faithfulness}. 
To simplify the faithfulness condition in~\eqref{eq:faithful}, we relax the requirement of the cost prediction for all perturbed basic blocks to be in $\mathcal{T}$ with the requirement that the probability of the cost of perturbed blocks being in $\mathcal{T}$ to be higher than a threshold. This threshold will denote the degree of faithfulness of our explanations. This probability can be represented as $Pr_{\alpha\sim\mathcal{D}_{\mathcal{F}}}(\costmodel(\alpha)\in\mathcal{T})$, where $\mathcal{D}_{\mathcal{F}}$ is a distribution over all perturbed basic blocks that retain the features in $\mathcal{F}$, $\pertmodel(\mathcal{F})$.  
We identify that the probability is analogous to \emph{precision}~\eqref{eq:precision} used in prior work~\citep{anchors}, and hence we adopt this terminology.  
Thus, probabilistic faithful explanations $\mathcal{F}$ must satisfy the condition $\hat{\varphi}(\mathcal{F})$, given by~\eqref{eq:probfaithful}, where $0\leq\delta\leq1$. 

\begin{equation}
    \label{eq:precision}
    Prec(\mathcal{F}) \triangleq Pr_{\alpha\sim\mathcal{D}_{\mathcal{F}}}(\costmodel(\alpha)\in\mathcal{T})
\end{equation}
\begin{equation}
    \label{eq:probfaithful}
    \hat{\varphi}(\mathcal{F}) \triangleq (\mathcal{F}\subseteq\mathcal{P})\text{ and }(Prec(\mathcal{F})\geq(1-\delta))
\end{equation}
As the distribution over basic blocks, $\mathcal{D}_\mathcal{F}$ in~\eqref{eq:precision} should be such that $\hat{\varphi}(\mathcal{F})$ closely approximates the ideal faithfulness condition~\eqref{eq:faithful} which has no prioritization over the perturbed basic blocks, it should ideally be a uniform distribution over its sample space $\pertmodel(\mathcal{F})$ and hence depend on $\mathcal{F}$. 

\textbf{Probabilistic generalizability and simplicity}. 
To relax the computation in~\eqref{eq:maxgenset} we overapproximate it with the perturbed basic blocks' set, $\pertmodel(\mathcal{F})$. Thus, for higher generalizability, we maximize $\lvert\pertmodel(\mathcal{F})\rvert$. 
Note that $\pertmodel$ is a monotonically decreasing function (proof in Appendix~\ref{app:proionofs}). 
Thus, for simplicity of explanations too we can 
maximize $\lvert\pertmodel(\mathcal{F})\rvert$. 
We normalize $\lvert\pertmodel(\mathcal{F})\rvert$ with the number of all possible perturbations of the basic block, $\lvert\pertmodel(\emptyset)\rvert$ where $\emptyset$ denotes an empty set of features to preserve. $\pertmodel(\emptyset)$ is independent of $\mathcal{F}$ and hence the normalization will not affect the optimization problem's output. Intuitively, the resultant fraction in the optimization objective would denote the fraction of all possible perturbations that preserve the feature set $\mathcal{F}$. We relax this computation by replacing it with the probability of finding the features in $\mathcal{F}$ in a randomly selected valid perturbation of the basic block. We identify that this probability is analogous to \emph{coverage} in prior work~\citep{anchors}, and hence we adopt this terminology. 
Coverage constitutes a probabilistic notion of generalizability and simplicity of explanations, and hence we maximize the coverage in our optimization objective. \eqref{eq:coverage} defines the coverage of a set of features $\mathcal{F}$, where $\mathcal{D}$ is a distribution over all perturbations of the input basic block, $\pertmodel(\emptyset)$. To obtain an unbiased measure of coverage, $\mathcal{D}$ should ideally be a uniform distribution. 
\begin{equation}
    \label{eq:coverage}
    Cov(\mathcal{F}) \triangleq Pr_{\alpha\sim\mathcal{D}}(\mathcal{F}\subseteq\mathcal{P}_\alpha)
\end{equation}


\textbf{Overall practical optimization problem}. 
Thus, our optimization problem to practically find the desirable explanation $\Hat{\mathcal{F}}^{*}$ for  $\costmodel(\basicblock)$ becomes~\eqref{eq:overallapproxpractical}.
\begin{equation}
    \label{eq:overallapproxpractical}
    \Hat{\mathcal{F}}^{*} \triangleq \mathop{argmax}_{\mathcal{F}\text{ s.t. } \mathcal{F}\subseteq\mathcal{P}}  \lvert Cov(\mathcal{F})\rvert \text{ s.t. } Prec(\mathcal{F})\geq(1-\delta)
\end{equation}



\section{COMET: Neural Cost Model Explanation Framework}\label{sec:implementation}
This section presents \tool{}, our novel framework for efficiently generating desirable explanations for the predictions made by a given cost model for a target basic block. 
%
The core operation of \tool{} is to efficiently solve the optimization problem in~\eqref{eq:overallapproxpractical}. While COMET is not conceptually limited to any particular Instruction Set Architecture (ISA), we develop the framework for one of the most popular ISA \textemdash x86 and leave the development for other ISAs to future work. 
An overview of \tool{}'s algorithm on an example x86 basic block is shown in Figure~\ref{fig:overview}. To create interpretable explanations, \tool{} first decomposes the input basic block $\basicblock$ into its features. 
We restrict $\mathcal{P}$, which consists of all possible features of a basic block, to block features $\Hat{\mathcal{P}}\subset\mathcal{P}$ [Section~\ref{sec:explanationfeatures}], to reduce the possible sets of features to evaluate in the optimization problem in~\eqref{eq:overallapproxpractical}. 
%
Next, we need to evaluate the precision of each subset of $\Hat{\mathcal{P}}$ to identify a subset $\mathcal{F}$ that has $Prec(\mathcal{F})\geq(1-\delta)$ and has maximum $Cov(\mathcal{F})$. To efficiently solve this constrained optimization problem, \tool{} adapts the Anchors explanation algorithm~\citep{anchors}, which has a similar optimization objective [Section~\ref{sec:anchoradapt}]. 

\subsection{Extracting block features}\label{sec:explanationfeatures}

\tool{} casts the input basic block into a multigraph $\graphbb = (\mathcal{V},\mathcal{E})$, which we describe next. Figure~\ref{fig:overview}(ii) shows the multigraph for the example block. 
We define $\mathcal{V}$ as the set of vertices of the multigraph, corresponding to all the instructions, annotated with their positions in the block. 
%
$\mathcal{E}$ consists of directed edges between instructions that have data dependencies, labeled by the types of data dependency hazards between them. Please refer to Appendix~\ref{app:deps}  for information on the different types of data dependencies commonly found in assembly basic blocks. 
Figure~\ref{fig:overview}(ii) shows the RAW type of hazard that is present in the example block.  Separate edges for different dependencies help identify specific dependencies as bottlenecks in explanations.
Distinguishing dependencies can be crucial to debugging cost models. If a cost model is found to base its predictions on dependencies that are optimized by the compiler, developers can debug the cost model to eradicate any spurious correlations.

We constitute $\Hat{\mathcal{P}}$ with the instructions, data dependencies, and number of instructions of the block. 
Figure~\ref{fig:overview}(iii) shows the features in $\Hat{\mathcal{P}}$ for the example basic block. These features are used in the design of popular hand-engineered cost models~\citep{uica,iaca,llvm-mca}, correspond better with the neural cost models~\citep{ithemal}, and hence are interpretable for our stakeholders. We restrict to these common features used in popular cost models to make \tool{} focus on the important set of features and generate explanations efficiently. 



\subsection{Efficiently computing explanations}\label{sec:anchoradapt}

To efficiently compute explanations, \tool{} empirically estimates $Prec(\mathcal{F})$ and $Cov(\mathcal{F})$ with samples from basic block distributions, $\mathcal{D}_{\mathcal{F}}$ and $\mathcal{D}$ respectively. 
%
%
%
%
We have designed \emph{basic block perturbation algorithms} to sample from $\mathcal{D}_{\mathcal{F}}$ and $\mathcal{D}$, which essentially perturb basic block $\basicblock$ to obtain blocks $\basicblock'$ according to the corresponding distribution from the underlying sample space. As discussed in Section~\ref{sec:practicalexp}, we want both $\mathcal{D}_{\mathcal{F}}$ and $\mathcal{D}$ to be uniform distributions over their respective sample spaces to compute unbiased approximations of the ideal desirable explanations. Observe that, $\mathcal{D}$ is hence a special case of $\mathcal{D}_{\mathcal{F}}$ with $\mathcal{F} = \emptyset$. 
Thus, a common perturbation algorithm can be used for both $\mathcal{D}_{\mathcal{F}}$ and $\mathcal{D}$. 



\textbf{Basic block perturbation algorithm}. 
\tool{}'s core basic block perturbation algorithm $\Gamma$ takes a set of features $\mathcal{F}\subseteq\Hat{\mathcal{P}}$ of basic block $\basicblock$ as input and randomly perturbs $\basicblock$ to obtain $\basicblock'\sim\mathcal{D}_{\mathcal{F}}$ such that $\basicblock'$ retains the features in $\mathcal{F}$ and has some of the other features in $\Hat{\mathcal{P}}$ perturbed to values valid according to the underlying ISA. Opcodes can be perturbed only to those that can accept the original set of operands, according to the ISA. Operands can be perturbed to only those having the same type and size. We elaborate on the perturbation algorithm further in this section. Figure~\ref{fig:overview}(v) shows an example perturbation of $\basicblock$ created by $\Gamma$ when preserving the features in the retained features set [Figure~\ref{fig:overview}(iv)]. While we ideally want $\mathcal{D}_{\mathcal{F}}$ to be a uniform distribution, its underlying sample space of perturbed basic blocks which preserve features in $\mathcal{F}$ is large (check Appendix~\ref{app:samplespacesize} to get an idea of the magnitude of these sample spaces) and complex without a closed-form characterization and is also defined differently for individual $\mathcal{F}$. 
This makes designing an algorithm to generate uniform samples for each $\mathcal{F}$ hard. 
Hence we relax the requirement of sampling from a uniform distribution to the ability of $\Gamma$ to produce diverse perturbed basic blocks so that the probability of obtaining a given basic block is small. Algorithm~\ref{alg:perturb_bb} in Appendix~\ref{app:pertalgo} presents the pseudocode of $\Gamma$ to 
perturb a given basic block. 

$\Gamma$ perturbs the multigraph corresponding to the basic block, $\graphbb$ to obtain $\graphbb'$, which uniquely corresponds to the perturbed basic block, such that the features in $\mathcal{F}$ are preserved.
To obtain $\graphbb'$, 
%
%
$\Gamma$ attempts to perturb every feature that is allowed to be perturbed, independent of the others. This is because any dependence will restrict the possible choices for perturbed blocks and hence disproportionately increase the probabilities of some possible perturbations. To create independence between the basic block features, $\Gamma$ perturbs vertices of the basic block graph $\graphbb$ independent of each other. $\Gamma$ also perturbs the data dependency edges that do not have any vertex in common, independent of each other. However, when two data dependency edges have at least one vertex in common if they are caused by a common operand in the instruction corresponding to the common vertex, then all perturbations to edges can not be made completely independent, otherwise they are perturbed independently. 
For independence between vertex and data dependency edge perturbations, $\Gamma$ perturbs only the opcode of the corresponding instruction of the vertex to denote vertex perturbation and only the operands of the instructions connected by edge to denote edge perturbations. 
%
$\Gamma$ preserves the opcodes of the instructions corresponding to every data dependency in $\mathcal{F}$ but can perturb other data dependencies between the instructions by operand changes. 

\emph{Perturbing vertices of $\graphbb$}.
$\Gamma$ perturbs vertices by either deleting or replacing them with other valid vertices. Deletion is permissible when the number of instructions is not required to be preserved. When a vertex is deleted, all incoming and outgoing edges of the vertex are removed from $\graphbb$. 
To replace a vertex, the corresponding instruction's opcode is replaced with another opcode in the ISA that can produce a valid assembly basic block instruction (an instruction that does not contain certain opcodes such as \opcolor{call} or \opcolor{jmp}) with the operands of the original instruction. 
%
Overall, $\Gamma$ independently perturbs or retains every vertex with equal probability, where a vertex is perturbed by either deleting or replacing it, again with equal probability.


\emph{Perturbing edges of $\mathcal{G}$}.
$\Gamma$ perturbs data dependency edges by deleting the corresponding dependency. 
The dependency is deleted by perturbing some operands corresponding to the dependency to other operands of the same type and size. The type of an operand could be memory, register, or immediate/constant, while its size could be any power of $2$ between $8-512$ bits. Hence, we change the operand registers/memory addresses to other registers/memory addresses to break the data dependencies. 
%
Overall, $\Gamma$ either perturbs or retains a data dependency by similar probabilities. The exact probabilities of perturbation and retention will be basic block specific and are discussed in Appendix~\ref{app:pertmodeldetails}.  
\textbf{Computing explanations}. With the basic block perturbation algorithm, $Prec(\mathcal{F})$ is estimated using KL-divergence-based confidence intervals~\citep{kllucb} and $Cov(\mathcal{F})$  is estimated by its empirical value, for a given set of features $\mathcal{F}\subseteq\Hat{\mathcal{P}}$. 
Similar to the Anchors' construction~\cite{anchors}, \tool{} iteratively builds its explanation feature set using a beam search wherein the maximum (estimated) precision feature sets at each level are iteratively expanded to larger feature sets till the precision threshold of $(1-\delta)$ is exceeded. The maximum coverage feature set with precision $>(1-\delta)$ is  \tool{}'s explanation for $\costmodel(\basicblock)$.

\section{Evaluation}\label{sec:evaluation}
We evaluate \tool{} to answer two main questions:

\emph{Correctness}. Do \tool{}'s explanations accurately reflect the given cost model's behavior?
    
    \emph{Utility}. Can \tool{}'s explanations be used to understand the behavior of cost models? 

\textbf{Experimental setup}. All our experiments were conducted on a 12th Gen 20-core Intel i9 processor (cache size: 24MB, RAM: 32GB, clock speed: 2.5GHz, with AVX support). We set the precision threshold $(1-\delta)$ in \eqref{eq:precision} as $0.7$. We have set the probabilities of retention and perturbation of every feature in a basic block as $0.5$. For instruction-type features where there are two possible perturbations, deletion and replacement, we assign probabilities to the perturbation operations based on an extensive hyperparameter study (Appendix~\ref{app:hyperparam}). We have used the default hyperparameters in the Anchor algorithm~\citep{anchors} for the beam-search-based iterative explanation construction method. We study the sensitivity of \tool{} to its hyperparameters in Appendix~\ref{app:hyperparam}. 
We have developed and tested \tool{} for the x86 microarchitecture. We use basic blocks from the popular BHive dataset~\citep{bhive}. We randomly pick $200$ basic blocks with number of instructions between 4 and 10 from BHive, to make our \emph{explanation test set} for testing \tool{}'s explanation. 
We run each experiment for 5 different seeds and report the average results, with their standard deviations. 

\textbf{Computing the accuracy of \tool{}'s explanations}. 
To evaluate the correctness of \tool{}'s explanations, we have developed a crude, but non-trivial,  interpretable, analytical cost model, $\crudecostmodel$. The advantage of such a model is that it gives us reliable \emph{ground truth of explanations} with which we can compare \tool{}'s explanations and compute their \emph{accuracy}. We are not aware of any actual intricate analytical cost models that have a closed-form representation that could give us ground truth explanations to objectively compute \tool{}'s accuracy, which is why we had to design $\crudecostmodel$ for \tool{}'s evaluation. 
We define $cost_{inst}(\instruction)$, $cost_{dep}(\dependency_{ij})$, and $cost_\numinsts(n)$ as the costs of the instruction $\instruction$, data dependency $\dependency_{ij}$ between instructions $i$ and $j$, and number of instructions $\numinsts = n$ respectively in a given basic block.  
\eqref{eq:crude} presents the functional form of $\crudecostmodel$. $\crudecostmodel$ identifies the features of the basic block $\basicblock$ which have the maximum cost and hence are bottlenecks for its execution, and predicts their cost as $\basicblock$'s cost. 
Our rationale behind $\crudecostmodel$ is derived from a throughput prediction baseline analytical model in \cite{uica} whose throughput prediction is the maximum of the individual costs for the number of instructions, the number of memory reads, and the number of memory writes in the input basic blocks. In $\crudecostmodel$ we have instead picked up basic block features such as its instructions and data dependencies to make the cost predictions more specific to a given block. Thus, $\crudecostmodel$ serves as a realistic, interpretable cost model, to measure the accuracy of \tool{}’s explanations. Custom $\crudecostmodel$ models can be developed for each microarchitecture when the individual cost functions vary with the microarchitecture. The exact, microarchitecture-dependent forms of the 3 cost functions used in our experiments are given in Appendix~\ref{app:crude}. 
\begin{equation}
    \label{eq:crude}
    \begin{split}
        \crudecostmodel(\basicblock) = max\{& cost_\numinsts(n), \mathop{max}_i\{cost_{inst}(\instruction_i)\}, \\ & \mathop{max}_{\dependency_{ij}}\{cost_{dep}(\dependency_{ij})\}\}
    \end{split}
\end{equation}
The ground truth explanation for $\crudecostmodel(\basicblock)$ is given by $GT(\basicblock)$~\eqref{eq:gtexpcrude}, where $type(f)$ is the type of the feature $f$ which would be one of $inst$, $dep$, and $\numinsts$. $GT(\basicblock)$ essentially is the set of basic block features that have the maximum cost among the costs for all the features. 
\begin{equation}
    \label{eq:gtexpcrude}
    GT(\basicblock) = \{f \mid f\in\Hat{\mathcal{P}}, cost_{\langle type(f)\rangle}(f) = \crudecostmodel(\basicblock)\}
\end{equation}
Note that $GT(\basicblock)$ may not be a singleton set, as there can be multiple features that are equally important and lead to the same $\crudecostmodel(\basicblock)$. We consider an explanation for $\crudecostmodel$’s prediction for $\basicblock$ to be accurate if it identifies at least one feature from $GT(\basicblock)$ and nothing outside $GT(\basicblock)$. 
%
%
We are not aware of any other competent cost model explanation methods to compare \tool{}'s accuracy against, hence we design two natural baseline explanation algorithms: \emph{random} and \emph{fixed}. The random explanation baseline includes features $f$ of $\basicblock$ based on the probability of occurrence of a feature of $type(f)$ in the set of all ground truth explanations of all basic blocks in the explanation test set. The fixed explanation baseline identifies the most frequent feature type in the set of ground truth explanations for all blocks in the explanation test set and assigns the first feature of that type in the block to be the fixed explanation.

\begin{table}
    \centering
    \caption{Accuracy of \tool{}'s explanations.}
    \begin{tabular}{@{}lrr@{}}
        \toprule
          Explanation & Acc.(\%) over $\crudecostmodel_{HSW}$  & Acc.(\%) over 
 $\crudecostmodel_{SKL}$\\
         \midrule
         Random & $26.56 \pm 20.30$ & $26.60 \pm 20.34$\\
         Fixed & $72.33$ & $74.0$\\
         \tool{} & $\mathbf{96.90 \pm 0.92}$ & $\mathbf{98.00 \pm 0.80}$\\
         \bottomrule
    \end{tabular}
    
    \label{tab:simplebaselineaccuracy}
    
\end{table} 

\begin{table}
    \centering
    \caption{Average Precision and Coverage for \tool{}'s explanations for Ithemal (I) and uiCA (U) for Haswell and Skylake.} 
    \begin{tabular}{@{}lrrr@{}}
        \toprule
          Model  & Av. Precision  & Av. Coverage\\
         \midrule
         I (HSW)  & $0.79\pm0.005$ & $0.19\pm0.007$\\
         I (SKL)  & $0.81\pm 0.004$  & $0.19\pm 0.014$\\
         U (HSW) & $0.78\pm0.006$ & $0.18\pm0.012$\\
         U (SKL) & $0.79\pm 0.006$ & $0.18\pm 0.012$ \\
         \bottomrule
    \end{tabular}
    
    \label{tab:avgpreccov}
\end{table}
\subsection{Accuracy-based evaluation of \tool{}}
Table~\ref{tab:simplebaselineaccuracy} presents the explanation accuracy achieved by \tool{} and the explanation baselines over $\crudecostmodel$ for the Haswell (HSW) and Skylake (SKL) microarchitectures. The accuracy values indicate a significant improvement in the correctness of explanations given by \tool{} over the baselines and testify the correctness of \tool{}'s explanations. 
Note that as the fixed explanation baseline does not have any randomness, it does not have any uncertainty. 

The high accuracy of \tool{}'s explanations over $\crudecostmodel$, which makes its cost predictions using the same set of features as \tool{}, indicates that \tool{} can faithfully identify the set of features that lead to the prediction when they are within the set of features that it uses to compose explanations. Note that this high accuracy has been achieved with just query access to the cost model. However, for actual cost models, it may not be the case that \tool{}'s explanation features are used directly for cost prediction. Generally, some complex functions of these basic features will be used to obtain the cost. Hence, we next estimate the precision of \tool{}'s explanations for actual cost models. 
\begin{figure*}[htb]
    \centering
    \includegraphics[width=\textwidth]{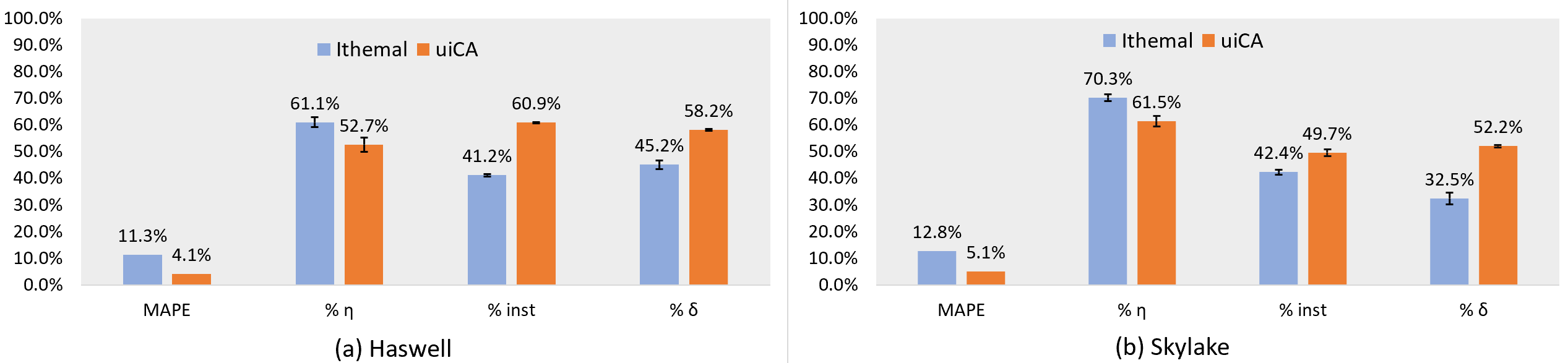}
    \caption{Variation of Mean Absolute Percentage Error (MAPE) in Ithemal and uiCA alongside variation in the \%  of explanations consisting: number of instructions $\numinsts$, specific instructions $\instruction$ and data dependencies $\dependency$. Figure (a): Haswell, (b): Skylake}
    \label{fig:utilityoverall}
\end{figure*}
\subsection{Precision and coverage evaluation} 
Next, we study the average precision and coverage of \tool{}'s explanations for state-of-the-art throughput-predicting cost models:  neural model Ithemal~\citep{ithemal}, and simulation-based model uiCA~\citep{uica} over the basic blocks in the explanation test set. We selected Ithemal and uiCA as representative cost models due to their high prediction accuracy and popularity among our stakeholders. \tool{} is applicable to other models as well, as it requires just query access to them. The average precision and average coverage are metrics to indicate \tool{}’s potential for generating faithful and generalizable explanations respectively of a target cost model for individual basic blocks in our explanation test set. 
As these cost models are not analytical, they do not have ground-truth explanations, and hence we use average precision and average coverage as proxies to evaluate the explanations, similar to~\citep{anchors}. 
The average time taken to explain a block for each model is roughly a minute. Table~\ref{tab:avgpreccov} presents our findings for Ithemal and uiCA developed for Haswell (HSW) and Skylake (SKL) microarchitectures.

We observe that the explanations for all the cost models have fairly high average precision (probability of being faithful). We estimate the coverage (generalizability) of each explanation over $10k$ perturbed basic blocks. Thus, an average coverage of $0.19$ means that the explanation for 1 block generalizes to $1900$ sampled blocks on average, making it significant for the complex domain of basic blocks. The coverage values obtained are similar to the coverage achieved while explaining models in NLP~\citep{anchors}. These results indicate that the high accuracy of \tool{} over our custom cost model $\crudecostmodel$ transfers to state-of-the-art cost models as well and \tool{} can be deployed to obtain high-quality explanations for common cost models. Next, we study how \tool{} can become an essential model analysis tool for our stakeholders.


\begin{figure*}[htb]
    \centering
    \includegraphics[width=\textwidth]{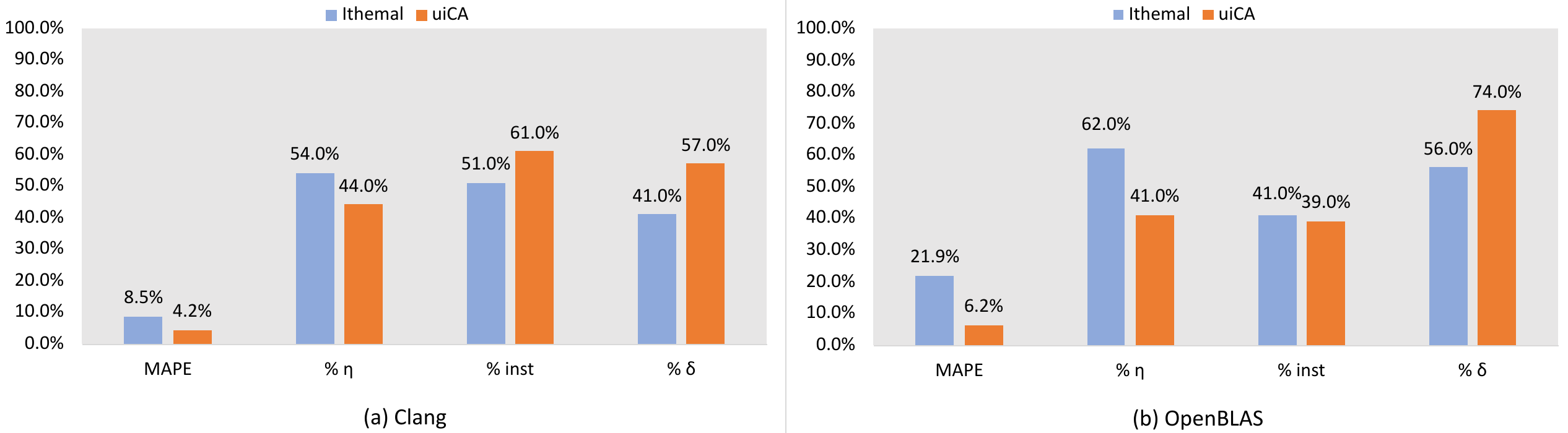}
    \caption{Variation of Mean Absolute Percentage Error (MAPE) in Ithemal and uiCA with the \% of explanations consisting: number of instructions $\numinsts$, instructions $\instruction$ and data dependencies $\dependency$. BHive sources: (a) Clang, (b) OpenBLAS}
    \label{fig:bhivesources}
\end{figure*}

\begin{figure*}[h]
    \centering
    \includegraphics[width=\textwidth]{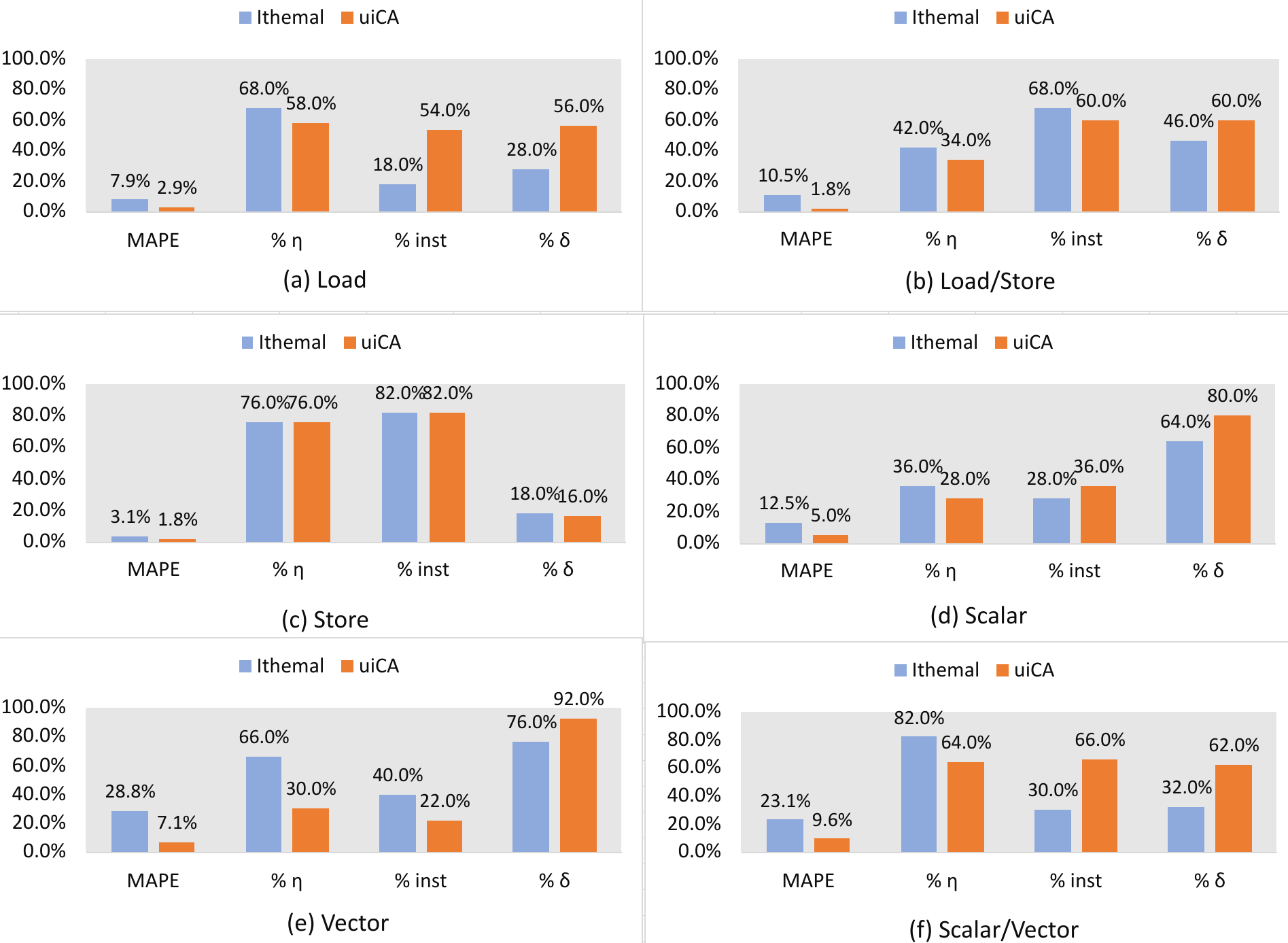}
    \caption{ Mean Absolute Percentage Error (MAPE) in Ithemal and uiCA and the \% of explanations having: number of instructions $\numinsts$, instructions $\instruction$ and data dependencies $\dependency$. BHive categories: (a) Load,  (b) Load/Store, (c) Store, (d) Scalar,  (e) Vector, (f) Scalar/Vector}
    \label{fig:bhivecats}
\end{figure*}

\subsection{Evaluating utility of \tool{}}\label{sec:cometutileval}
We show a use case of \tool{} wherein we investigate the variation in the prediction errors of Ithemal and uiCA and empirically check its correlation with the dependence of the model's output on different types of block features. We hypothesize that as the error of the cost model decreases, its dependence on the finer-grained features of the block will increase. Out of the $3$ types of features over which \tool{} composes its explanations, we identify the block's instructions and data dependencies as more specific, finer-grained features when compared to the feature corresponding to the number of instructions in the block. We use \tool{}'s explanations to identify the block features on which the model's prediction depends. Note that our hypothesis is not obvious even for arbitrary traditional cost models (let alone the black-box neural cost models), as is evident from the performance of a baseline cost model in \citet{uica}, Table 1, that uses coarse-grained features of a basic block and achieves higher accuracy than LLVM-MCA~\citep{llvm-mca} which uses finer-grained features. 
Figure~\ref{fig:utilityoverall} shows the results of our investigation. It shows the variation of mean absolute percentage error of Ithemal and uiCA. Alongside the error, it shows the percentage of \tool{}'s explanations over the entire explanation test set that contain features corresponding to the number of instructions $\numinsts$, instructions $\instruction$, and data dependencies $\dependency$ in the explained basic block. 
 

The trends in Figure~\ref{fig:utilityoverall} for both Haswell and Skylake confirm our hypothesis.
 Interpret this insight as follows: as the cost model becomes more accurate, it focuses more on the finer-grained features of the basic block, as indicated by \tool{}'s explanations. We discuss similar insights obtained for blocks derived from different partitions of the BHive dataset, as described in Appendix~\ref{sec:bhive} next. 
%
We omit the error bars for clarity, as the standard deviations in our results are generally low [Figure~\ref{fig:utilityoverall}]. 

\textbf{BHive partitions by source}. 
We study the explanations for blocks in BHive derived from the Clang and OpenBLAS sources. We select $100$ unique blocks from each source to separately analyze our hypothesis. Figure~\ref{fig:bhivesources} presents our findings and confirms our hypothesis for both partitions.

\textbf{BHive partitions by category}.
We conduct a similar study on 50 unique basic blocks corresponding to each category in the BHive dataset. Figure~\ref{fig:bhivecats} presents our findings and confirms our hypothesis for all categories. 
Interestingly, for the \emph{Store} category, as the error in throughput predictions of both cost models is similar, we observe similar prominence of all types of features in \tool{}'s explanations for both cost models. 
This observation further supports our hypothesis.


\subsection{Case studies}\label{app:casestudies}
Next, we show another use case of \tool{}'s explanations \textemdash to conduct analyses of cost prediction of individual basic blocks. Similar analyses can be useful to understand the cost model's behavior in corner cases. We discuss \tool{}'s explanations for the predictions of  Ithemal and uiCA for Haswell on randomly picked blocks from the BHive dataset.


\textbf{Case study 1}. The block in Listing~\ref{code:cs1} is predicted to have a throughput of 2 cycles by both cost models which matches the throughput on actual hardware reported in the BHive dataset. 
Instructions 2 and 3 write to the memory and are thus the highest throughput instructions~\cite{uops.info,agnerfog}. Hence intuitively, for correct prediction, these instructions are important. \tool{}'s explanations for both cost models match this intuition, thus suggesting that both cost models actually consider the intuitive set of features to correctly predict throughput for this block.
\begin{code}
\begin{center}
\begin{minipage}{0.4\textwidth}
\begin{lstlisting}[language={[x86masm]Assembler}]
lea rdx, [rax + 1]
mov qword ptr [rdi + 24], rdx
mov byte ptr [rax], 80
mov rsi, qword ptr [r14 + 32]
mov rdi, rbp
\end{lstlisting}
\end{minipage}

\begin{tabular}{@{}llr@{}}
     & \textbf{Prediction} & \textbf{Explanation}\\
     \textbf{Ithemal} & 2 cycles & $\left\{\instruction_2, \instruction_3\right\}$\\
     \textbf{uiCA} & 2 cycles & $\left\{\instruction_2, \instruction_3\right\}$\\
\end{tabular}
\captionof{listin}{Case Study 1}
\label{code:cs1}
\end{center}
\end{code}

\textbf{Case study 2}.
The block in Listing~\ref{code:cs2} has a division instruction and many data dependencies such as a RAW data dependency between instructions 3 and 6 due to register \opcolor{rax} and a WAR dependency between instructions 1 and 2 due to register \opcolor{edx}. A div instruction is a very expensive instruction in general~\cite{uops.info,agnerfog}. The actual throughput of the basic block is 39 cycles. Thus, both cost models have made incorrect predictions, but the prediction of Ithemal is more erroneous as compared to uiCA. \tool{}'s explanation for Ithemal consists of just the feature corresponding to the number of instructions in the basic block, while that for uiCA consists of the \opcolor{div} instruction and a data dependency. These explanations suggest that Ithemal does not sufficiently prioritize costly instructions such as \opcolor{div} and data dependencies, unlike the actual microarchitecture that Ithemal is trained to mimic, thus indicating potential sources of its throughput-prediction error. 

\begin{code}
\begin{center}
\begin{minipage}{0.4\textwidth}
\begin{lstlisting}[language={[x86masm]Assembler}]
mov ecx, edx
xor edx, edx
lea rax, [rcx + rax - 1]
div rcx
mov rdx, rcx
imul rax, rcx
\end{lstlisting}
\end{minipage}

\begin{tabular}{@{}llr@{}}
     & \textbf{Prediction} & \textbf{Explanations}\\
     \textbf{Ithemal} & 23 cycles & $\left\{\numinsts (num\_insts)\right\}$\\
     \textbf{uiCA} & 36 cycles & $\left\{\dependency_{RAW,3,6},\instruction_4\right\}$\\
\end{tabular}
\captionof{listin}{Case Study 2}
\label{code:cs2}
\end{center}
\end{code}

    

\section{Discussion and future work}
We demonstrated how \tool{}'s explanations can be used to gain both high-level [Section~\ref{sec:cometutileval}] and case-specific [Section~\ref{app:casestudies}] insights about cost models and compare their behaviors against other cost models. These insights can be useful for repairing high-error neural models with domain-specific insights and developing more generalizable models in the future. As indicated by these insights, neural architectures that explicitly utilize the finer-grained features of blocks can achieve better cost prediction performance. 
\tool{}'s explanations can be used to select a model from a collection of similar performing neural models. 
\tool{} can be extended to run on GPUs to make it amenable to integration with cost model training and inference procedures, in the future. \tool{}'s feedback can be leveraged to update the model parameters during training to have the predictions rely on finer-grained features. \tool{} can be augmented to existing cost models to guide compiler optimizations with information on what parts of the basic block need to be optimized for better performance.

While explanation features employed by \tool{} currently capture the commonly used properties of a block, it will produce approximations for the most important factors behind a model's predictions when they cannot be captured by the current features. We will investigate expanding the explanation features in future work. 
Finally, \tool{} can be extended to other open-source ISAs, including those for GPUs and TPUs, by mapping the current perturbation algorithm to the new ISA. We need to define the opcodes (operands) that could replace each opcode (operand) to generate a valid perturbation. While the high-level formalism can be carried over, instance-specific challenges can arise.

\section{Conclusion}
We presented \tool{}, the first approach for efficiently generating faithful, generalizable, and interpretable explanations for neural cost models. Our results show that \tool{} can generate accurate and useful explanations that indicate potential sources of errors. 
We believe that \tool{}'s explanations can be used to improve trust in the workings of neural cost models and accelerate their real-world adoption.  
\section{Acknowledgement}
We thank the anonymous reviewers for their insightful comments. This work was supported in part by
NSF Grants No. CCF-2238079, CCF-2316233, CNS-2148583, and Google Research Scholar award.

\bibliography{refs}
\bibliographystyle{plainnat}

\appendix
\newpage
\newpage
\section{Monotonicity of perturbation function}\label{app:proionofs}

\begin{theorem}
    $\pertmodel$ is a monotonically decreasing function.
\end{theorem}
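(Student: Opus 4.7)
The plan is to unpack the definition of $\pertmodel$ and argue directly from set containment. Recall that, per the paper, $\pertmodel(\mathcal{F})$ is defined as the collection of basic blocks $\basicblock'$ that differ from $\basicblock$ only through perturbations to features in the complement $\mathcal{P}\setminus\mathcal{F}$. Interpreting ``monotonically decreasing'' in the natural set-valued sense, the statement to establish is: for any $\mathcal{F}_1, \mathcal{F}_2 \subseteq \mathcal{P}$ with $\mathcal{F}_1 \subseteq \mathcal{F}_2$, we have $\pertmodel(\mathcal{F}_2) \subseteq \pertmodel(\mathcal{F}_1)$.

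First I would fix arbitrary $\mathcal{F}_1 \subseteq \mathcal{F}_2 \subseteq \mathcal{P}$ and an arbitrary element $\basicblock' \in \pertmodel(\mathcal{F}_2)$. By the defining property of $\pertmodel$, the basic block $\basicblock'$ is obtained from $\basicblock$ by perturbations confined to features in $\mathcal{P}\setminus\mathcal{F}_2$, and the features in $\mathcal{F}_2$ are preserved. Next I would invoke the elementary set-theoretic fact that $\mathcal{F}_1 \subseteq \mathcal{F}_2$ implies $\mathcal{P}\setminus\mathcal{F}_2 \subseteq \mathcal{P}\setminus\mathcal{F}_1$. Consequently, any perturbation lying inside $\mathcal{P}\setminus\mathcal{F}_2$ also lies inside $\mathcal{P}\setminus\mathcal{F}_1$, so $\basicblock'$ is likewise obtainable by perturbations restricted to $\mathcal{P}\setminus\mathcal{F}_1$ while preserving the features in $\mathcal{F}_1 \subseteq \mathcal{F}_2$. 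Therefore $\basicblock' \in \pertmodel(\mathcal{F}_1)$, and since $\basicblock'$ was arbitrary, $\pertmodel(\mathcal{F}_2) \subseteq \pertmodel(\mathcal{F}_1)$.

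There is no genuine obstacle here; the argument is a one-line consequence of contravariance of set complementation. The only point where care is needed is making sure the informal phrase ``differs from $\basicblock$ by some perturbations to the features in $\mathcal{P}\setminus\mathcal{F}$'' is read as an upper bound on where perturbations may occur (not an equality), so that a block preserving strictly more features than required by $\mathcal{F}_1$ still qualifies for membership in $\pertmodel(\mathcal{F}_1)$. With that reading fixed, the proof is immediate, and it in turn justifies the paper's use of $Cov(\mathcal{F}) = |\pertmodel(\mathcal{F})| / |\pertmodel(\emptyset)|$ as a well-defined quantity bounded in $[0,1]$.
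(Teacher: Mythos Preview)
Your proposal is correct and follows essentially the same idea as the paper's proof: both arguments establish $\pertmodel(\mathcal{F}_2)\subseteq\pertmodel(\mathcal{F}_1)$ for $\mathcal{F}_1\subseteq\mathcal{F}_2$ via the contravariance of complementation, with the paper writing $\pertmodel(\mathcal{F}_1)$ as a disjoint union containing $\pertmodel(\mathcal{F}_2)$ while you do the equivalent element-chasing argument. The only cosmetic difference is presentation; your remark about the ``upper bound'' reading of the perturbation set is exactly the implicit assumption the paper uses when it notes that features outside $\mathcal{F}$ may be either retained or perturbed.
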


\begin{proof}
    Let $F_1,F_2\in\wp(\mathcal{P})$ such that $F_1\subseteq F_2$.  

    \begin{equation*}
        \begin{split}
            \pertmodel(F_1) =& \{\basicblock'\mid\basicblock'\in B, \\ & F_1\subseteq\mathcal{P}_{\basicblock'}, \mathcal{P}_{\basicblock'}\setminus F_1\text{ are obtained from }\mathcal{P}\setminus F_1\}\\
            =& \{\basicblock'\mid\basicblock'\in B, F_2\subseteq\mathcal{P}_{\basicblock'},\\ &\mathcal{P}_{\basicblock'}\setminus F_2\text{ are obtained from }\mathcal{P}\setminus F_2\}\\
            & \cup\{\basicblock'\mid\basicblock'\in B, F_1\subseteq\mathcal{P}_{\basicblock'},  F_2\not\subseteq\mathcal{P}_{\basicblock'}, \\ & \mathcal{P}_{\basicblock'}\setminus F_1\text{ are obtained from }\mathcal{P}\setminus F_1\}\\ 
            =& \pertmodel(F_2)\cup\{\basicblock'\mid\basicblock'\in B, F_1\subseteq\mathcal{P}_{\basicblock'},  F_2\not\subseteq\mathcal{P}_{\basicblock'}, \\ & \mathcal{P}_{\basicblock'}\setminus F_1\text{ are obtained from }\mathcal{P}\setminus F_1\}
        \end{split}
    \end{equation*}

    Hence, $\pertmodel(F_2)\subseteq\pertmodel(F_1)$
\end{proof}

Note that in the above proof, features in feature sets such as $\mathcal{P}_{\basicblock'}\setminus F_1$ are obtained by either retaining or perturbing the features in $\mathcal{P}\setminus F_1$.

A similar proof can be used to prove the monotonicity of $\Hat{\pertmodel}$ as well. 

\section{Types of data dependencies in basic blocks}\label{app:deps}
While each instruction is processed sequentially by the different components of the CPU, an instruction $\instruction_j$ can get stalled due to the requirement for a previous instruction $\instruction_i$ to get completed, creating a \emph{data dependency hazard}~\citep{comporgdes}. A Read-After-Write (RAW) data-dependency hazard arises when $\instruction_j$ reads the value in an operand that is written by $\instruction_i$. $\instruction_j$ can not get executed until $\instruction_i$ ends to ensure correct execution. 
A Write-After-Read (WAR) hazard occurs when $\instruction_j$ writes to an operand that is read by $\instruction_i$.
A Write-After-Write (WAW) hazard arises when $\instruction_j$ writes to an operand that is written to by $\instruction_i$. 
There can be multiple data dependency hazards, possibly of different kinds, between a given pair of instructions. 

\section{Basic Block Perturbation Algorithm}\label{app:pertalgo}
Algorithm~\ref{alg:perturb_bb} presents our stochastic perturbation algorithm $\Gamma$ to conditionally perturb a given basic block $\basicblock$ to $\basicblock'$. The perturbation algorithm creates the graph $\graphbb'$ of $\basicblock'$ while preserving a set of instructions/their corresponding vertices $\overline{\mathcal{V}}$, a set of data dependencies/their corresponding edges $\overline{\edges}$ and possibly the number of instructions/the number of vertices, denoted by the boolean $preserve_\numinsts$ which is set to true when the number of instructions $\numinsts$ is to be kept constant. If the number of vertices is to be kept constant, then the vertex/instruction deletion operation is forbidden [lines~\ref{alg:numinststart}-\ref{alg:numinstend}]. The vertices at the ends of the edges in $\overline{\edges}$ are preserved as well [line~\ref{alg:addvertices4deps}] by adding them to $\overline{\mathcal{V}}$. Then each vertex of $\graphbb$ is perturbed with a probability of $(1-p_{I,ret})$ if it is not required to be retained [lines~\ref{alg:nodeperturbstart}-\ref{alg:nodeperturbend}]. If the deletion perturbation operation is in vertexPertOps, then a vertex is deleted or replaced with probabilities of $p_{del}$ and $(1-p_{del})$ respectively. Otherwise, it is replaced with a valid vertex. The replacement of a vertex/corresponding instruction involves changing its opcode to another opcode that can take the original operands and still constitute valid x86 syntax according to the x86 Instruction Set Architecture. Similarly, each data-dependency edge is perturbed with a probability of $(1-p_{D,ret})$ if it is not required to be retained [lines~\ref{alg:edgeperturbstart}-\ref{alg:edgeperturbend}], to form $\graphbb'$ [line~\ref{alg:makepertbb}]. The only perturbation of any data dependency is its deletion, which is conducted by the perturbation of the operands involved in the data dependency. 

\begin{algorithm}
   \caption{Basic Block Perturbation Algorithm}
   \label{alg:perturb_bb}
\begin{algorithmic}[1]
   \STATE {\bfseries Input:} basic block graph $\graphbb$, vertices to preserve $\overline{\mathcal{V}}$, data-dependency edges to preserve $\overline{\edges}$, $preserve_\numinsts$, $p_{I,ret}$, $p_{D,ret}$, $p_{del}$
   \STATE {\bfseries Output:} perturbed basic block graph, $\graphbb'$
   \STATE vertexPertOps = \{replacement, deletion\}
   \IF{$preserve_\numinsts$}\label{alg:numinststart}
        \STATE vertexPertOps.remove(\{deletion\})
   \ENDIF\label{alg:numinstend}
   \STATE $\overline{\mathcal{V}} \gets addVerticesForPreservedDeps(\overline{\mathcal{V}}, \overline{\edges})$\label{alg:addvertices4deps}
   \FOR{$\node \in GetVertices(\basicblock)$}\label{alg:nodeperturbstart}
        \IF{$v \not\in \overline{\nodes}$ \AND $rand([0,1])> p_{I,ret}$}
            \STATE $\node \gets PerturbVertex(\graphbb, \node, vertexPertOps, p_{del})$
        \ENDIF
   \ENDFOR\label{alg:nodeperturbend}
   \FOR{$\edge \in GetDepEdges(\basicblock)$}\label{alg:edgeperturbstart}
       \IF{$\edge \not\in \overline{\edges}$ \AND $rand([0,1])> p_{D,ret}$}
            \STATE $\edge \gets PerturbEdge(\graphbb, \edge)$
        \ENDIF
   \ENDFOR\label{alg:edgeperturbend}
   \STATE $\graphbb' \gets \graphbb$\label{alg:makepertbb}
\end{algorithmic}
\end{algorithm}

\section{Case specificity of perturbation probabilities} \label{app:pertmodeldetails}
\tool{}'s perturbation algorithm $\Gamma$ consists of primarily 3 probability terms: $p_{I,ret}$, $p_{D,ret}$, and $p_{del}$ as described in Appendix~\ref{app:pertalgo}. $p_{I,ret}$ and $p_{D,ret}$ are the probabilities of retention of a given instruction and a given data dependency respectively, in the perturbed basic block. $p_{del}$ is the probability of deletion of an instruction when the deletion perturbation operation is allowed for instructions. 
The deletion perturbation operation will not be allowed for instructions when the number of instructions is to be kept constant.

$\Gamma$ perturbs a basic block $\basicblock$ by essentially perturbing every instruction while preserving certain tokens of the instruction from getting perturbed. These preserved tokens correspond to the features that are required to be preserved by $\Gamma$ and also the features that $\Gamma$ voluntarily does not attempt to perturb. $\Gamma$ has voluntary retention of randomly selected basic block features to output perturbed basic blocks $\basicblock'$ that are very similar to the original basic block $\basicblock$. $\Gamma$ attempts to perturb the other tokens of $\basicblock$ to obtain $\basicblock'$. 

$\Gamma$ can delete an instruction in case none of its tokens are required to be preserved. Otherwise, $\Gamma$ replaces a token with another token that can form a basic block with valid x86 syntax alongside the other tokens. Thus, every token has a set of potential replacements. Perturbations to opcode tokens are counted as changes to the instruction features, while perturbations to the operand tokens are considered as changes to any data dependency features.
As the perturbation space consists of only valid basic blocks, the overall probabilities of the primitive perturbation operations (instruction deletion, instruction replacement, and data dependency deletion) vary with the target basic block. 

Following is an example of this variation. Several tokens of x86 assembly have no possible replacements resulting in no probability of replacement, such as the opcode \opcolor{lea}. This is a special opcode that loads the effective memory address of its source operand into the destination register. There is no other x86 opcode that shows similar behavior. Hence, the \opcolor{lea} can not be replaced with any other opcode. Such failed attempts at opcode replacement lead to the retention of the instruction, thus leading to an increase in the probability of retention of specific features of the basic block. This change in probabilities is specific to the basic blocks having the \opcolor{lea} opcode in its instructions. 

Another example of basic-block-specific probability settings occurs due to data dependencies. The data dependencies in a basic block can be varied with changes in just the opcodes of the corresponding instructions. Thus, while we keep the perturbation probability of a data dependency $(1-p_{D,ret})$ to be $0.5$ in the general case, it can vary with the basic block. A basic block having all the potential replacements for the opcodes involved in a data dependency with similar behavior as the original opcodes will have $0.5$ probability of perturbation of the data dependency, while the opcodes for which we have potential replacements show variable behaviors, the data dependency perturbation probability can be more than $0.5$. (Opcodes \opcolor{add} and \opcolor{sub} have similar behavior as they read the value in the source operand and read-write the value in the destination operand. They have different behavior from \opcolor{mov} that reads the source operand value and writes to the destination operand. All 3 opcodes could be potential replacements for each other in instructions having certain pairs of operands.)



\section{Ablation and Sensitivity Studies}\label{app:hyperparam}
In this section, we study the variations in our results, with \tool{}'s hyperparameters and design choices. 
We use our explanation accuracy-based evaluation scheme based on our crude but interpretable cost model that is presented in Section~\ref{sec:evaluation}, to study the effects of the different hyperparameters and design choices. For this study, we have used the crude cost model for the Haswell microarchitecture. We have randomly selected 100 basic blocks from the BHive dataset~\cite{bhive} for which we generate \tool{}'s explanations with different settings. We have dropped the error bars for clarity of the results, as we note from Table~\ref{tab:simplebaselineaccuracy} that the standard deviations in our results are generally low.

\subsection{Precision threshold}
In this section, we study the variation in the explanations' accuracy with the precision threshold set in \tool{}, above which we consider the explanation feature set to be approximately faithful to the cost model's predictions. We want the precision threshold to be high such that the most precise and accurate explanations are given as output. 
Figure~\ref{fig:abl_precthresh} presents the variation in the accuracy of \tool{}'s explanations with various values for the precision threshold $(1-\delta)$ in \tool{}. We observe that $0.7$ is the highest precision threshold that gives the highest accuracy and hence we have set it as the precision threshold in our experiments. 

\begin{figure}
    \centering
    \includegraphics[width=0.5\textwidth]{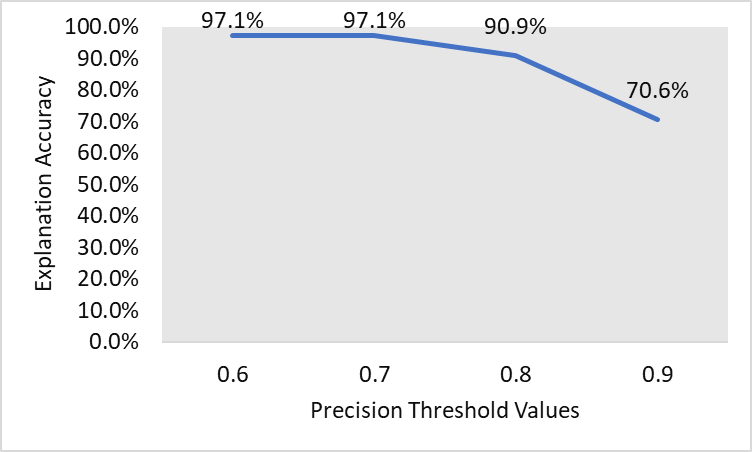}
    \caption{Variation in explanation accuracy with the precision threshold $(1-\delta)$ setting in \tool{}}
    \label{fig:abl_precthresh}
\end{figure}

\subsection{Perturbation probabilities for instructions}
$\Gamma$ attempts to perturb a given instruction $\instruction$ in a basic block $\basicblock$ only when it is not required to be preserved. $\Gamma$ retains $\instruction$ with a probability of $p_{I,ret}$ and perturbs it otherwise. 
There are 2 potential operations for perturbing $\instruction$: Deletion and Replacement (with valid x86 instruction), each probabilities $p_{del}$ and $(1-p_{del})$ respectively. We have set $p_{del} = 0.33$ based on a sensitivity study that we conducted with respect to this hyperparameter, for all of our experiments. Figure~\ref{fig:abl_instdel} presents our findings. We find that our choice of $p_{del} = 0.33$ leads to the maximum accuracy among other candidates. 

\begin{figure}
    \centering
    \includegraphics[width=0.5\textwidth]{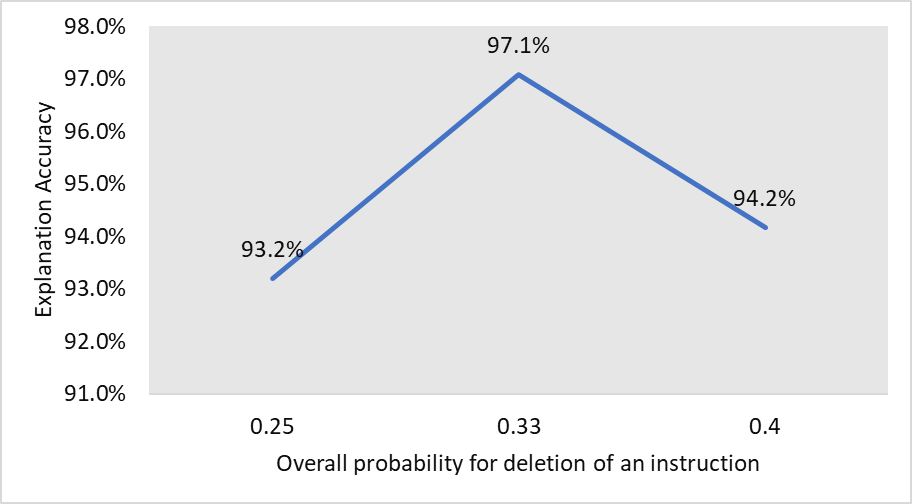}
    \caption{Variation in explanation accuracy with the probability of instruction deletion in $\Gamma$}
    \label{fig:abl_instdel}
\end{figure}

\subsection{Perturbation probabilities for data dependencies}
Similar to the case for instructions, $\Gamma$ attempts to perturb a given data dependency $\dependency$ in a basic block $\basicblock$ with probability $(1-p_{D,ret})$. As discussed in Section~\ref{app:pertmodeldetails}, the exact probabilities of the retention/deletion of data dependencies are basic-block-specific. However, we vary these probabilities by varying the probability of explicit retention of a data dependency, i.e. the probability by which a data dependency will be retained for sure. This probability is a lower bound for $p_{D,ret}$ and higher values of this lower bound imply higher values for $p_{D,ret}$ for any given basic block. Figure~\ref{fig:abl_datadeps} shows our findings. We have shown the variation in explanation precision as well, as we observe precision to have a trend different from explanation accuracy in this case. We find that a value of $0.1$ for this probability parameter leads to optimum values for both explanation accuracy and precision. 
Thus, we have selected the explicit data dependency retention probability to be $0.1$ in \tool{}. 

\begin{figure}
    \centering
    \includegraphics[width=0.5\textwidth]{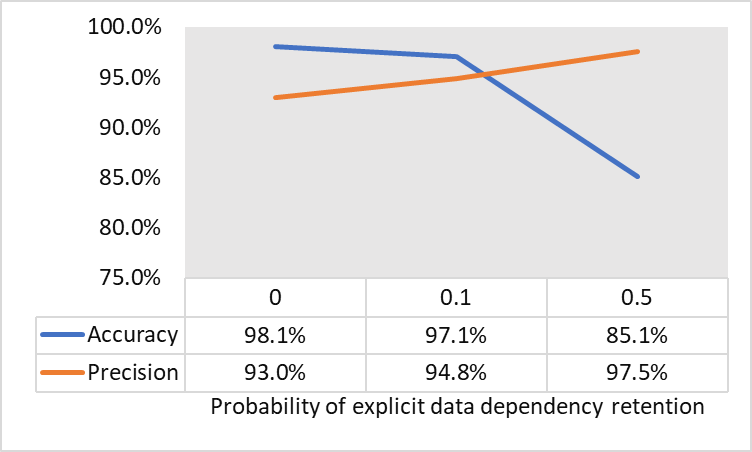}
    \caption{Variation in explanation accuracy and precision with the probability of explicit data dependency retention}
    \label{fig:abl_datadeps}
\end{figure}

\subsection{Replacement of instructions}
$\Gamma$ considers only the changes to an instruction's opcode as changes to the feature corresponding to the instruction. 
However, another possibility could be to consider operand changes (such that their types and sizes are preserved) as well as changes to the instruction feature. We analyze the effects of the two instruction changing/replacement schemes in Figure~\ref{fig:abl_instrep}. We observe that the accuracy of the explanations is higher with just the opcode replacement method, justifying our choice of this instruction replacement scheme. 

\begin{figure}
    \centering
    \includegraphics[width=0.5\textwidth]{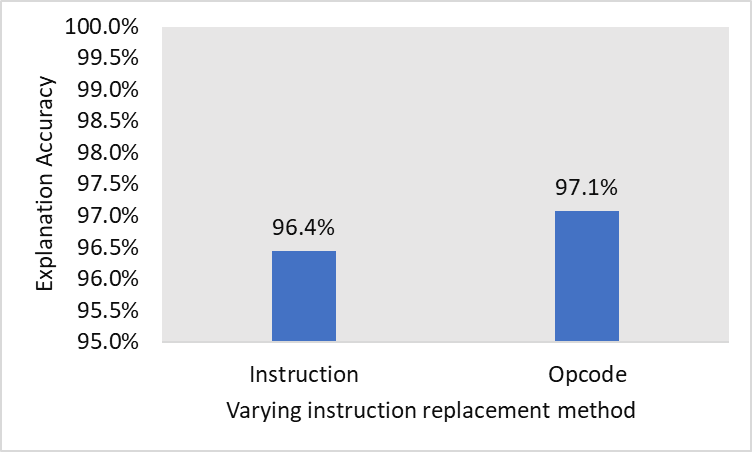}
    \caption{Variation in explanation accuracy with just opcode and whole instruction replacement schemes.}
    \label{fig:abl_instrep}
\end{figure}

An important hyperparameter that we have set according to our intuitive understanding is the $\epsilon$ error, which marks the radius of the ball of acceptable cost predictions around the prediction of cost model $\costmodel$ for basic block $\basicblock$ ($\costmodel(\basicblock)$). For our crude cost model $\crudecostmodel$, we have kept $\epsilon$ to be a quarter of one unit of its cost prediction, as the least change in its cost prediction can be a quarter unit ($\frac{\Delta n}{4} = 0.25$). For the practical cost models such as Ithemal and uiCA, we have set $\epsilon$ as $0.5$ cycles of throughput prediction, as that is the least, significant change in practically-useful throughput values.  

\section{Perturbation function output sizes}\label{app:samplespacesize}
The perturbation function, $\pertmodel_\basicblock:\wp(\mathcal{P}_\basicblock)\rightarrow\wp(\mathcal{B})$ maps a given set of basic block features $\mathcal{F}$ to the set of basic blocks $\mathcal{B}_{\mathcal{F}}$ that have $\mathcal{F}$ and where the other features are obtained from perturbations to the features in $\mathcal{P}_\basicblock\setminus\mathcal{F}$. In this section, we provide estimates of cardinalities of $\mathcal{B}_{\mathcal{F}}$ for some basic blocks $\basicblock$ and feature sets $\mathcal{F}$. With this analysis, we allude to the practical intractability of generating ideal black-box explanations for cost models.

Note that, as $\mathcal{P}_\basicblock$ is the set of all features (all basic features and all of their functions) of $\basicblock$, it can be an infinite set itself. $\Hat{\mathcal{P}}_\basicblock\subset\mathcal{P}_\basicblock$, hence for $\mathcal{F}\subseteq\Hat{\mathcal{P}}_\basicblock$, $\Hat{\pertmodel}_\basicblock(\mathcal{F})\subseteq\pertmodel_\basicblock(\mathcal{F})$. Hence, $\lvert\Hat{\pertmodel}_\basicblock(\mathcal{F})\rvert\leq\lvert\pertmodel_\basicblock(\mathcal{F})\rvert$. Thus, we provide estimates for $\lvert\pertmodel_\basicblock(\mathcal{F})\rvert$ by reporting the rough values for $\lvert\Hat{\pertmodel}_\basicblock(\mathcal{F})\rvert$. 

First, consider the basic block $\basicblock_1$ in Listing~\ref{code:pfsize1}, for $\mathcal{F}=\emptyset$. $\lvert\Hat{\pertmodel}_{\basicblock_1}(\emptyset)\rvert \approx 1.94\times 10^{38}$. As we add more elements to $\mathcal{F}$, the size of $\lvert\Hat{\pertmodel}_{\basicblock_1}(\mathcal{F})\rvert$ will reduce due to the constraints introduced to the perturbations. 

\begin{code}
\begin{center}
\begin{minipage}{0.4\textwidth}
\begin{lstlisting}[language={[x86masm]Assembler}]
vdivss xmm0, xmm0, xmm6
vmulss xmm7, xmm0, xmm0
vxorps xmm0, xmm0, xmm5
vaddss xmm7, xmm7, xmm3
vmulss xmm6, xmm6, xmm7
vdivss xmm6, xmm3, xmm6
vmulss xmm0, xmm6, xmm0
\end{lstlisting}
\end{minipage}
\captionof{listin}{Basic block $\basicblock_1$ for perturbation function size estimation}
\label{code:pfsize1}
\end{center}
\end{code}

Next, for $\mathcal{F}=\{\instruction_1\}$ i.e. with no perturbations to instruction 1 in $\basicblock_1$, $\lvert\Hat{\pertmodel}_{\basicblock_1}(\mathcal{F})\rvert \approx 6.58\times 10^{29}$. 

Similarly, consider the basic block $\basicblock_2$ in Listing~\ref{code:pfsize2}, for $\mathcal{F}=\emptyset$. $\lvert\Hat{\pertmodel}_{\basicblock_2}(\emptyset)\rvert \approx 1.63\times 10^{32}$. For $\mathcal{F}=\{\instruction_2\}$ i.e. with no perturbations to instruction 2 in $\basicblock_2$, $\lvert\Hat{\pertmodel}_{\basicblock_2}(\mathcal{F})\rvert \approx 2.77\times 10^{28}$. 

\begin{code}
\begin{center}
\begin{minipage}{0.4\textwidth}
\begin{lstlisting}[language={[x86masm]Assembler}]
shl eax, 3
imul rax, r15
xor edx, edx
add rax, 7
shr rax, 3
lea rax, [rbp + rax - 1]
div rbp
imul rax, rbp
mov rbp, qword ptr [rsp + 8]
sub rbp, rax
\end{lstlisting}
\end{minipage}
\captionof{listin}{Basic block $\basicblock_2$ for perturbation function size estimation}
\label{code:pfsize2}
\end{center}
\end{code}

Thus, we find that the perturbation function's output set can have very high cardinality, posing a challenge for generating desirable explanations. 

\section{Crude interpretable cost model details}\label{app:crude}
We define $cost_{inst}(\instruction)$ as the throughput of the instruction $\instruction$ on actual hardware. We obtain the throughputs of instructions over actual hardware from \url{https://www.uops.info/table.html}. 
We define $cost_{dep}(\dependency_{ij})$ as in~\eqref{eq:depcost}. Our intuition behind keeping the costs of WAR and WAW type of dependencies to be $0$ is that these dependencies are not true dependencies and can be generally resolved by the compiler by register renaming~\cite{comporgdes}. The RAW data dependency, on the other hand, is a true dependency. As the two instructions forming a RAW dependency will be executed sequentially on hardware, the addition of their individual costs would be a good proxy for the actual throughput cost brought in by the data dependency. 
\begin{equation}
\label{eq:depcost}
    \begin{split}
        &cost_{dep}(\dependency_{ij})  \\ 
    &=\small{\begin{cases}
        0, &\dependency_{ij}=\text{WAR/WAW}\\
        cost_{inst}(\instruction_i) + cost_{inst}(\instruction_j), &\dependency_{ij}=\text{RAW}
    \end{cases}}
    \end{split}
\end{equation}
We define the $cost_\numinsts(n) = \numinsts/4$ as the cost for having $n$ number of instructions (denoted by $\numinsts$) in a given basic block $\basicblock$. We derive the expression for the cost of number of instructions from the simple baseline model presented in \cite{uica}. 

Our choice of $\crudecostmodel$ is microarchitecture-specific as the costs of individual instructions vary across microarchitectures. We have developed $\crudecostmodel$ models for the Haswell and Skylake microarchitectures, only for the purposes of evaluating \tool{}'s explanations. 

\section{Studied dataset and cost models}\label{app:ithemaluicabhive}
\subsection{BHive dataset}\label{sec:bhive}
BHive dataset\footnote{\url{https://github.com/ithemal/bhive}}~\cite{bhive} is a benchmark suite of x86 basic blocks. It contains roughly 300,000 basic blocks annotated with their average throughput over multiple executions on actual hardware for 3 microarchitectures: Haswell, Skylake, and Ivy Bridge. We have generated explanations for basic blocks in this dataset.

The dataset can be partitioned by 2 criteria: by \emph{source} and by \emph{category} of its basic blocks. Partition by source annotates each block with the real-world code base from which it has been derived. Examples of BHive sources are Clang and OpenBLAS. 
Partition by category annotates each basic block by its type, characterized by the semantics of the instructions in the block. There are 6 types of blocks: Scalar, Vector, Scalar/Vector, Load, Store, and Load/Store. 

\subsection{Ithemal} \label{sec:ithemal}
Ithemal\footnote{\url{https://github.com/ithemal/Ithemal}}~\cite{ithemal} is an ML-based cost model, which predicts the throughput of input x86 basic blocks for a given microarchitecture. It is open-source and is currently trained for the Haswell, Skylake, and Ivy Bridge microarchitectures on the BHive dataset. A separate instance of Ithemal needs to be trained for every microarchitecture, due to the difference in the actual throughput values obtained over different hardware. Ithemal's throughput prediction is a floating point number, as it is trained on the BHive dataset. 

Ithemal consists of a hierarchical multiscale RNN structure. The first RNN layer takes embeddings of tokens of the input basic block and combines them to create embeddings for the instructions in the basic block. The second RNN layer takes the instruction embeddings as input and combines them to create an embedding for the basic block. The basic block embedding is passed through a linear regressor layer to compute the throughput prediction for the basic block. 

Ithemal exhibits roughly 9\% Mean Absolute Percentage Error for the Haswell microarchitecture on the BHive dataset. As Ithemal outputs only its throughput prediction and no insights into why the prediction was made, it can not be reliably deployed in mainstream compiler optimizations. 

\subsection{uiCA} \label{sec:uica}
uiCA\footnote{\url{https://github.com/andreas-abel/uiCA}}~\cite{uica} is an analytical simulation-based cost model for several latest microarchitectures released by Intel over the last decade. uiCA’s simulation model is hand-engineered to accurately match the model of each Intel microarchitecture and must be manually tuned to reflect new microarchitectures. It can output detailed insights into its process of computing its throughput prediction of input x86 basic blocks, such as where in the CPU's pipeline its simulator identified a bottleneck for the execution of the basic block. 





\end{document}